\documentclass[a4paper,fleqn]{article}

\usepackage{amssymb,amsmath,amsthm}
\usepackage{graphicx}
\usepackage{hyperref}
\usepackage{pgfplots}
\pgfplotsset{compat=newest}

\newcommand{\cln}{\,:\,}
\newcommand{\ve}[1]{\mathbf v(#1)}

\theoremstyle{theorem}
\newtheorem{prop}{Proposition}
\theoremstyle{remark}
\newtheorem{remark}{Remark}

\begin{document}

\title{An implicitization-based solution to the minimal 4s/6r ToA problem using Cayley--Menger determinants}

\author{Evgeniy	Martyushev\\
South Ural State University\\
76 Lenin Avenue, Chelyabinsk 454080, Russia\\
{\tt martiushevev@susu.ru}
}

\date{}

\maketitle

\begin{abstract}
The paper introduces an efficient algebraic solver for the 4-sender/6-receiver (4s/6r) Time-of-Arrival (ToA) self-localization problem, which involves determining the relative positions of all receivers and senders given their pairwise distance measurements. The problem is addressed through a new parametrization combining Cayley--Menger determinants with an implicitization technique. The proposed algorithm proceeds in three steps. First, a $148\times 211$ Macaulay matrix is constructed from the coefficients of the original polynomial system. Second, PLU decomposition of this matrix yields a $63\times 63$ matrix pair. Finally, up to 38 real solutions are obtained via generalized eigendecomposition followed by a validation step. Experiments on synthetic noise-free data demonstrate that the proposed solver outperforms existing methods by approximately three orders of magnitude in numerical accuracy while achieving an average runtime of $1.3\times$ faster than the fastest alternative. Experiments on a real-world acoustic dataset confirm that, when integrated within a RANSAC framework, the solver provides a reliable initial guess for bundle adjustment refinement.

\vspace{0.3cm}

\noindent \textbf{Keywords:} Time-of-Arrival; Self-localization; Cayley--Menger determinant; Implicitization; Polynomial equations \and Elimination template
\end{abstract}

\section{Introduction}
\label{sec:intro}

The $m$-senders/$n$-receivers ($m$s/$n$r) Time-of-Arrival (ToA) self-localization problem involves determining the relative positions of $m$ senders $\mathbf s_i$ and $n$ receivers $\mathbf r_k$ in 3-space, given pairwise distance measurements $d_{ik}$ derived from signal propagation times, see Figure~\ref{fig:toa46}. This problem arises naturally in applications where absolute positions are unavailable, requiring the system to infer its own geometry solely from inter-element distances.

ToA based localization systems have found widespread adoption in various engineering domains including RF/ultra\-sound indoor localization~\cite{holm2009hybrid,mainetti2014survey,jia2015indoor,sakpere2017state,zafari2019survey}, anchor-free autonomous calibration of acoustic sensor networks~\cite{contini2012self,crocco2012closed,gaubitch2013auto,zhayida2015toa}, underwater navigation~\cite{eustice2007experimental,casey2007underwater,eustice2011synchronous,han2012localization,yi2015toa}, and swarm robotics coordination~\cite{han2015crlb,staudinger2021role,chen2022survey}. In all these applications, solving the self-localization problem efficiently and accurately is a fundamental task.

The problem becomes minimal in the sense of admitting a finite but non-zero number of solutions for generic values $d_{ik}$ when $m = 4$ and $n = 6$ (or vice versa, by symmetry)~\cite{stewenius2005grobner}. Another minimal case $m = n = 5$, not addressed in this paper, requires a single constraint on the distances $d_{ik}$~\cite{kuang2013complete}. The minimal 4s/6r configuration is of practical importance: it represents the smallest problem instance that yields isolated solutions, making it suitable as a hypothesis generator within robust estimation frameworks such as RANSAC~\cite{fischler1981random} for overdetermined ToA problems with many senders and receivers.

The minimal 4s/6r problem has been studied in prior works~\cite{stewenius2005grobner,kuang2013complete,larsson2017polynomial,larsson2020upgrade,martyushev2026automatic}. The state-of-the-art parametrization, introduced in~\cite{kuang2013complete}, formulates the minimal 4s/6r problem as a system of four cubic and one quartic polynomial equations in five variables. Its solution space decomposes into two components: a $1$-dimensional family of spurious roots without a feasible interpretation and a $0$-dimensional component consisting of $38$ points in complex space. Although this parametrization successfully reduced the problem to a compact polynomial system and enabled solver generation using automatic generators from~\cite{larsson2017efficient,larsson2017polynomial,martyushev2026automatic}, existing implementations have faced challenges in simultaneously achieving high numerical accuracy and computational efficiency.

\begin{figure}[ht]
\label{fig:toa46}
\centering 
\includegraphics[scale=0.35]{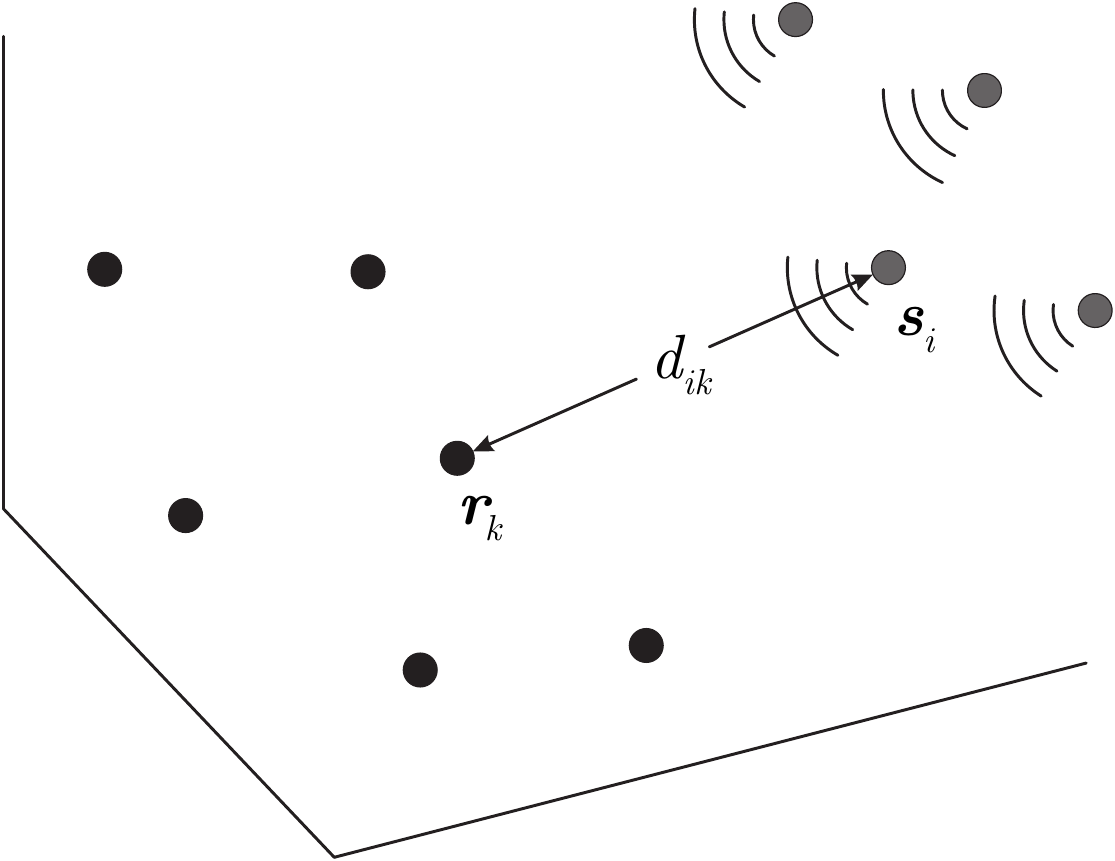}
\caption{4s/6r ToA self-localization problem}
\end{figure}

An alternative parametrization for the 4s/6r ToA problem is based on the implicitization technique. The idea of using implicitization for this problem was first suggested by Stew{\'{e}}nius~\cite{stewenius2005grobner}, who introduced the notion of an $11$-dimensional ``quadri-sonal vector'' and stated the existence of non-linear constraints on this vector. However, that work did not provide an explicit form of these constraints nor a practical solver based on this parametrization.

In this paper, we develop the implicitization approach for the 4s/6r ToA problem into an efficient practical solver. First, we construct an $11$-dimensional vector from Cayley--Menger determinants and apply implicitization to derive the complete set of polynomial constraints on its components. This yields a system of eight cubic and two quartic polynomial equations in five variables, whose solution space is $0$-dimensional and consists of exactly $38$ complex roots. For a given set of distance measurements, the system is solved efficiently using the method of elimination templates, an approach widely adopted in geometric computer vision and robotics~\cite{kukelova2008automatic,larsson2017efficient,larsson2018beyond,bhayani2020sparse,martyushev2022optimizing,martyushev2026automatic,martyushev2025forward}. The resulting solver is faster compared to state-of-the-art alternatives while also demonstrating significantly improved numerical accuracy.

The rest of the paper is organized as follows. Section~\ref{sec:prob} introduces the new parametrization for the 4s/6r ToA problem, reducing it to a system of ten polynomial equations in five variables. Section~\ref{sec:alg} details the proposed elimination template based algorithm for solving the ToA problem. In Section~\ref{sec:exper}, we evaluate the solver's performance, first on synthetic noise-free data to compare its robustness and speed against existing approaches, second on synthetic noisy data in the presence of outliers, utilizing the proposed solver as a hypothesis generator for RANSAC, and third on a real acoustic dataset. Section~\ref{sec:disc} discusses the results of the paper.

\section{Problem statement and new parametrization}
\label{sec:prob}

Consider the minimal 4s/6r ToA self-localization problem illustrated in Figure~\ref{fig:toa46}. Let $\mathbf s_i \in \mathbb R^3$ for $i = 1, \ldots, 4$ denote the positions of the senders, and $\mathbf r_k \in \mathbb R^3$ for $k = 1, \ldots, 6$ the positions of the receivers. The Euclidean distance between sender $\mathbf s_i$ and receiver $\mathbf r_k$ is $\|\mathbf s_i - \mathbf r_k\|$. The measured distance $d_{ik} = ct_{ik}$ is obtained from the time-of-arrival $t_{ik}$ and the known signal propagation speed $c$ (e.g., speed of light for RF signals or speed of sound for acoustic signals). This yields the system of quadratic equations:
\begin{equation}
\label{eq:ToA}
\|\mathbf s_i - \mathbf r_k\|^2 = d_{ik}^2, \quad i = 1, \ldots, 4, \quad k = 1, \ldots, 6.
\end{equation}
Due to the $6$ degree-of-freedom Euclidean ambiguity, the number of unknowns in system~\eqref{eq:ToA} reduces to $24$. Thus,~\eqref{eq:ToA} is a square system of $24$ equations in $24$ unknowns. Despite the simplicity of formulation~\eqref{eq:ToA}, developing computationally efficient and numerically accurate solvers for arbitrary distance measurements $d_{ik}$ remains a challenging problem.

\begin{remark}
\label{rem:genpos}
Throughout this paper, we assume that all senders and receivers are in general position. In particular, no two senders or two receivers coincide and both sender and receiver configurations span the full 3-space.
\end{remark}

\subsection{Number of solutions and symmetry}

It was conjectured in~\cite{stewenius2005grobner} that the 4s/6r ToA problem has $38$ complex solutions up to Euclidean symmetries. This can be verified computationally as follows. Fix the Euclidean ambiguity by setting
\begin{equation}
\mathbf s_1 = \mathbf 0, \quad (\mathbf s_2)_2 = (\mathbf s_2)_3 = (\mathbf s_3)_3 = 0,
\end{equation}
where $(\mathbf s_i)_j$ denotes the $j$th coordinate of $\mathbf s_i$. For randomly chosen rational distances $d_{ik}$ (not necessarily corresponding to a realizable configuration), consider the polynomial ideal
\begin{equation}
I = \langle\|\mathbf s_i - \mathbf r_k\|^2 - d_{ik}^2 \cln \forall i,k\rangle.
\end{equation}
Computing a Gr\"obner basis for $I$ over a sufficiently large finite field yields $304$ standard monomials (those not divisible by leading terms of $I$). This implies~\cite{Cox-IVA-2015} that system~\eqref{eq:ToA} has $304$ complex roots counting multiplicities. However, the set of roots exhibits the following $8$-fold symmetry. If $(\mathbf s_1, \ldots, \mathbf s_4, \mathbf r_1, \ldots, \mathbf r_6)$ is a root, then so is
\[
\begin{bmatrix}\pm 1 & 0 & 0 \\ 0 & \pm 1 & 0 \\ 0 & 0 & \pm 1 \end{bmatrix} \begin{bmatrix} \mathbf s_1 & \ldots & \mathbf s_4 & \mathbf r_1 & \ldots & \mathbf r_6 \end{bmatrix}
\]
for any combination of $+$'s and $-$'s. Consequently, the number of geometrically distinct solutions is $304/8 = 38$.

\subsection{New parametrization}

Consider the $4$-dimensional simplex embedded in $3$-space with vertices $\mathbf s_1, \ldots, \mathbf s_4$ and $\mathbf r_k$ for a certain $k$. Let $D_{ik} = d_{ik}^2$ denote the squared distances between senders and receivers, and $L_{ij} = \|\mathbf s_i - \mathbf s_j\|^2$ the squared inter-sender distances. Given $10$ squared edge lengths $L_{12}, \ldots, L_{34}, D_{1k}, \ldots, D_{4k}$ of the $4$-simplex, the Cayley--Menger determinant is defined as~\cite{berger2009geometry}:
\begin{equation}
\Gamma_k = \begin{vmatrix}
0 & 1 & 1 & 1 & 1 & 1\\
1 & 0 & L_{12} & L_{13} & L_{14} & D_{1k}\\
1 & L_{12} & 0 & L_{23} & L_{24} & D_{2k}\\
1 & L_{13} & L_{23} & 0 & L_{34} & D_{3k}\\
1 & L_{14} & L_{24} & L_{34} & 0 & D_{4k}\\
1 & D_{1k} & D_{2k} & D_{3k} & D_{4k} & 0
\end{vmatrix}.
\end{equation}
Since the hypervolume of the 4-simplex is up to a scale the squared Cayley--Menger determinant, it follows that $\Gamma_k = 0$. Through multivariate Taylor expansion, the equation $\Gamma_k = 0$ can be rewritten in the form
\begin{equation}
\label{eq:lineq}
\mathbf B_k^\top \mathbf T = 0,
\end{equation}
where the superscript $\top$ indicates matrix transpose and the coefficient vector $\mathbf B_k$ is given by
\begin{equation}
\mathbf B_k =
\begin{bmatrix}
\Delta_{2k}^2 &
\Delta_{3k}^2 &
\Delta_{4k}^2 &
2\Delta_{2k}\Delta_{3k} &
2\Delta_{2k}\Delta_{4k} &
2\Delta_{3k}\Delta_{4k} &
2\Delta_{2k} &
2\Delta_{3k} &
2\Delta_{4k} &
4D_{1k} &
1
\end{bmatrix}^\top
\end{equation}
with $\Delta_{ik} = D_{ik} - D_{1k}$. The vector $\mathbf T \in \mathbb R^{11}$ has the components:
\begin{equation}
\label{eq:vecT}
\small
\mathbf T = \begin{bmatrix}
-L_{13}^2 - L_{14}^2 - L_{34}^2 + 2L_{13}L_{14} + 2L_{13}L_{34} + 2L_{14}L_{34}\\
-L_{12}^2 - L_{14}^2 - L_{24}^2 + 2L_{12}L_{14} + 2L_{12}L_{24} + 2L_{14}L_{24}\\
-L_{12}^2 - L_{13}^2 - L_{23}^2 + 2L_{12}L_{13} + 2L_{12}L_{23} + 2L_{13}L_{23}\\
L_{14}^2 + L_{12}L_{13} + L_{24}L_{34} - L_{12}L_{14} - L_{12}L_{34} - L_{13}L_{14} - L_{13}L_{24} - L_{14}L_{24} - L_{14}L_{34} + 2L_{14}L_{23}\\
L_{13}^2 + L_{12}L_{14} + L_{23}L_{34} - L_{12}L_{13} - L_{12}L_{34} - L_{13}L_{14} - L_{14}L_{23} - L_{13}L_{23} - L_{13}L_{34} + 2L_{13}L_{24}\\
L_{12}^2 + L_{13}L_{14} + L_{23}L_{24} - L_{12}L_{13} - L_{13}L_{24} - L_{12}L_{14} - L_{14}L_{23} - L_{12}L_{23} - L_{12}L_{24} + 2L_{12}L_{34}\\
L_{13}^2L_{24} + L_{14}^2L_{23} + L_{34}^2L_{12} - L_{12}L_{13}L_{34} - L_{12}L_{14}L_{34} - L_{13}L_{14}L_{23} - L_{13}L_{14}L_{24} - L_{13}L_{24}L_{34} - L_{14}L_{23}L_{34} + 2L_{13}L_{14}L_{34}\\
L_{14}^2L_{23} + L_{12}^2L_{34} + L_{24}^2L_{13} - L_{12}L_{13}L_{24} - L_{13}L_{14}L_{24} - L_{12}L_{14}L_{23} - L_{12}L_{14}L_{34} - L_{12}L_{24}L_{34} - L_{14}L_{23}L_{24} + 2L_{12}L_{14}L_{24}\\
L_{12}^2L_{34} + L_{13}^2L_{24} + L_{23}^2L_{14} - L_{12}L_{14}L_{23} - L_{13}L_{14}L_{23} - L_{12}L_{13}L_{24} - L_{12}L_{13}L_{34} - L_{12}L_{23}L_{34} - L_{13}L_{23}L_{24} + 2L_{12}L_{13}L_{23}\\
T_{10}\\
-L_{12}^2L_{34}^2 - L_{13}^2L_{24}^2 - L_{14}^2L_{23}^2 + 2L_{12}L_{13}L_{24}L_{34} + 2L_{12}L_{14}L_{23}L_{34} + 2L_{13}L_{14}L_{23}L_{24}
\end{bmatrix},
\end{equation}
where
\begin{equation}
\label{eq:T10}
T_{10} = -\frac{1}{2}\begin{vmatrix}
0 & 1 & 1 & 1 & 1\\
1 & 0 & L_{12} & L_{13} & L_{14}\\
1 & L_{12} & 0 & L_{23} & L_{24}\\
1 & L_{13} & L_{23} & 0 & L_{34}\\
1 & L_{14} & L_{24} & L_{34} & 0
\end{vmatrix}.
\end{equation}
Thus, $\mathbf T$ consists of polynomial functions of the inter-sender squared distances $L_{ij}$ and $T_{10}$ is up to a scale the squared volume of the $3$-simplex with the vertices $\mathbf s_1, \ldots, \mathbf s_4$. According to Remark~\ref{rem:genpos}, we must have $T_{10} \neq 0$.

\begin{remark}
Stew{\'{e}}nius in~\cite{stewenius2005grobner} introduced an analogous $11$-dimensional ``quadri-sonal vector'' and stated the existence of non-linear constraints that can be found using implicitization. However, neither an explicit form of these constraints nor a corresponding solver was provided in~\cite{stewenius2005grobner}.
\end{remark}

The implicitization of $\mathbf T$ involves computing generators of the elimination ideal
\begin{equation}
J = \langle T_i - t_i \cln i = 1, \ldots, 11 \rangle \cap \mathbb Q[t_1, \ldots, t_{11}],
\end{equation}
where $T_i$ are components of $\mathbf T$ and $t_i$ are formal variables.

\begin{prop}
\label{prop:elim}
The ideal $J$ is generated by $8$ cubic, $12$ quartic, and $12$ quintic polynomials in $t_1, \ldots, t_{11}$.
\end{prop}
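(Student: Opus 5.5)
The plan is to prove the proposition computationally, using the standard elimination-ideal machinery, and then to add structural checks that explain the counts. By the Closure Theorem, $J$ is the ideal of the Zariski closure of the image of the polynomial map $\mathbf T\colon \mathbb C^6 \to \mathbb C^{11}$, $(L_{12},\ldots,L_{34}) \mapsto (T_1,\ldots,T_{11})$. First I would compute a Gr\"obner basis of
\[
\langle T_i - t_i \cln i = 1,\ldots,11\rangle \subset \mathbb Q[L_{12},\ldots,L_{34},t_1,\ldots,t_{11}]
\]
with respect to an elimination order, namely a block order with the $L_{ij}$ greater than the $t_i$. Intersecting with $\mathbb Q[t_1,\ldots,t_{11}]$ gives a Gr\"obner basis of $J$. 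From this basis I would extract a minimal generating set, and I would make it canonical by working with a grading. To get a grading, note that $\mathbf T$ is quasi-homogeneous: scaling $L_{ij}\mapsto \lambda L_{ij}$ multiplies $T_1,\ldots,T_6$ by $\lambda^2$, $T_7,\ldots,T_{10}$ by $\lambda^3$, and $T_{11}$ by $\lambda^4$. Hence $J$ is homogeneous with respect to the weights $(2,\ldots,2,3,\ldots,3,4)$.

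Since the proposition is stated in terms of ordinary degrees, I need to check that $J$ is also homogeneous in the standard grading, or at least that the minimal generators can be chosen standard-homogeneous. Otherwise the numbers of generators of each degree are not invariants. One route is to observe that the coefficients of $\mathbf B_k$ carry a second, compatible scaling. A simpler route is to verify directly that every generator found is homogeneous in both gradings, so that $J$ is bigraded. Given bigradedness, the number of minimal generators in each degree equals $\dim_{\mathbb Q}(J/\mathfrak m J)_d$, where $\mathfrak m=\langle t_1,\ldots,t_{11}\rangle$. I would compute these dimensions, for example from the graded Betti numbers $\beta_{0,d}$ returned by a minimal free resolution in Macaulay2, and check that they come out as $8,12,12$ in degrees $3,4,5$ and $0$ in all other degrees.

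For dimensional consistency I would also verify that the Jacobian of $\mathbf T$ has rank $6$ at a random point. Then $V(J)$ has dimension $6$ and codimension $5$ in $\mathbb C^{11}$. This matches the later use in the paper: the $6$ linear equations $\mathbf B_k^\top\mathbf t=0$, cutting a cone of dimension $6$ in $11$ variables, leave a $0$-dimensional projective set. As a further check I would confirm that $J$ is prime by computing it as the kernel of the ring map $\mathbb Q[t]\to\mathbb Q[L]$, $t_i\mapsto T_i$.

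The main obstacle is the cost of the Gr\"obner elimination. The $T_i$ have degree up to $4$, and the eliminant ideal has many generators of degree $5$, so a naive lex computation over $\mathbb Q$ may be infeasible. I would first try a modular computation over a large prime field. Because of the gradings, I would then use a degree-by-degree linear-algebra approach: for each $d\le 5$ (and somewhat beyond, for certification), compute the kernel of the evaluation map from monomials of degree $d$ in $t$ to polynomials in $L$. I would then lift to $\mathbb Q$ by rational reconstruction. Finally, I would certify completeness, meaning that no generators occur in degree $>5$, by checking that the Hilbert function of the ideal generated by the found polynomials agrees with the Hilbert function of the image algebra $\mathbb Q[T_1,\ldots,T_{11}]$, which is computed via the weighted grading.
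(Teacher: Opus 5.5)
Your core computation (a Gr\"obner basis of $\langle T_i-t_i\rangle$ for an elimination order, then extraction of a minimal generating set) is exactly the paper's proof, which runs \texttt{eliminate} and \texttt{mingens} in Macaulay2 and reads off the degrees. What fails is the layer you add to make the counts invariant. $J$ is \emph{not} homogeneous in the ordinary grading. Hence your check that every generator is homogeneous ``in both gradings'' fails, and $8,12,12$ cannot be certified as standard-graded Betti numbers $\beta_{0,3},\beta_{0,4},\beta_{0,5}$.

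Concretely, \eqref{eq:eq8} consists of cubic terms of weight $7$ together with $-4t_{10}t_{11}$, which has ordinary degree $2$ but also weight $3+4=7$. If $J$ were homogeneous in the ordinary grading, then $t_{10}t_{11}\in J$, i.e.\ $T_{10}T_{11}=0$ in the domain $\mathbb Q[L_{12},\ldots,L_{34}]$, which is false. The cubics ending in $-4t_7t_{10}$, $-4t_8t_{10}$, $-4t_9t_{10}$, $-4t_{10}^2$ behave the same way in weight $6$. So no extra scaling extracted from $\mathbf B_k$ can restore ordinary homogeneity.

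Your dimension check rests on the same misconception. If $V(J)$ were a $6$-dimensional cone, its intersection with the $5$-dimensional solution space of $\mathbf B_k^\top\mathbf t=0$, $k=1,\ldots,6$, would again be a cone. It would have no isolated point other than $\mathbf 0$ (and is expected to be just $\{\mathbf 0\}$), so you would not obtain a $0$-dimensional projective set. The $38$ isolated nonzero solutions exist only because $V(J)$ is merely invariant under $t_i\mapsto\lambda^{\omega_i}t_i$ with $(\omega_i)=(2,\ldots,2,3,3,3,3,4)$, while the linear forms $\mathbf B_k^\top\mathbf t$ mix weights.

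To repair the plan, grade by these weights throughout. Then $J$ is homogeneous, and the invariants are the weighted Betti numbers $\beta_{0,e}$ and the total number of minimal generators. For instance, the eight cubics split into four of weight $6$ and four of weight $7$. Your degree-by-degree kernel computations and Hilbert-series certification must then run over weighted degrees, with monomials of weight $e$ mapping to forms of degree $e$ in the $L_{ij}$. Kernels on monomials of a fixed ordinary degree only produce ordinary-homogeneous relations and would never yield generators such as \eqref{eq:eq8}.

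The words ``cubic, quartic, quintic'' in the proposition then refer to the ordinary (top) degrees of the particular weighted-homogeneous minimal generators that \texttt{mingens} returns. Your worry that such a breakdown is not an invariant of $J$ is therefore justified. The paper's proof does not establish it as one; it simply reports the output of the computation.
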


\begin{proof}
Let $S = \{T_i - t_i \cln i = 1, \ldots, 11\}$. The elimination can be performed with the following \texttt{Macaulay2}~\cite{macaulay2} code:
\begin{verbatim}
R = QQ[L12,L13,L14,L23,L24,L34,t_1..t_11];
I = ideal(S);
J = eliminate({L12,L13,L14,L23,L24,L34}, I);
mingens J
\end{verbatim}
The result is the claimed set of generators. The eight cubic and two of the quartic generators are listed explicitly below in Eqs.~\eqref{eq:eq1}--\eqref{eq:eq10}.
\end{proof}

Let $\mathbf t = \begin{bmatrix}t_1 & \ldots & t_{11}\end{bmatrix}^\top$. Each receiver $\mathbf r_k$ provides a linear constraint on $\mathbf t$ of type~\eqref{eq:lineq}. Thus, with $6$ receivers, $\mathbf t$ can be parametrized in the form
\begin{equation}
\label{eq:vect}
\mathbf t = \frac{v}{z}\, \mathbf t^{(1)} + \frac{w}{z}\, \mathbf t^{(2)} + \frac{x}{z}\, \mathbf t^{(3)} + \frac{y}{z}\, \mathbf t^{(4)} + \frac{1}{z}\, \mathbf t^{(5)},
\end{equation}
where $v, w, x, y, z$ are unknowns and $\{\mathbf t^{(1)}, \ldots, \mathbf t^{(5)}\}$ forms a basis for the null-space of the stacked coefficient matrix $\begin{bmatrix}\mathbf B_1 & \ldots & \mathbf B_6\end{bmatrix}^\top$. The parametrization~\eqref{eq:vect} is preferred over the more direct form $\mathbf t = v\, \mathbf t^{(1)} + w\, \mathbf t^{(2)} + x\, \mathbf t^{(3)} + y\, \mathbf t^{(4)} + z\, \mathbf t^{(5)}$ for two reasons: (i) the latter introduces a spurious multiple zero solution, and (ii) it leads to a less computationally efficient solver.

Proposition~\ref{prop:elim} specifies $32$ polynomial constraints on the vector $\mathbf t$. While using all of these constraints would minimize both the elimination template size and the number of false roots, the computational cost of deriving the coefficients after substituting~\eqref{eq:vect} becomes prohibitive. As a practical compromise, we use only the following $10$ constraints for our solver:
\begin{itemize}
\item all $8$ cubic constraints:
\begin{align}
\label{eq:eq1}
(t_4t_6 - t_2t_5)t_7 + (t_4t_5 - t_1t_6)t_8 + (t_1t_2 - t_4^2)(t_9 - t_{10}) &= 0,\\
(t_5t_6 - t_3t_4)t_7 + (t_1t_3 - t_5^2)(t_8 - t_{10}) + (t_4t_5 - t_1t_6)t_9 &= 0,\\
(t_2t_3 - t_6^2)(t_7 - t_{10}) + (t_5t_6 - t_3t_4)t_8 + (t_4t_6 - t_2t_5)t_9 &= 0,\\
(t_2t_3 - t_6^2)t_1 + (t_1t_3 - t_5^2)t_4 + (t_1t_2 - t_4^2)t_5 - 4t_7t_{10} &= 0,\\
(t_1t_3 - t_5^2)t_2 + (t_2t_3 - t_6^2)t_4 + (t_1t_2 - t_4^2)t_6 - 4t_8t_{10} &= 0,\\
(t_1t_2 - t_4^2)t_3 + (t_2t_3 - t_6^2)t_5 + (t_1t_3 - t_5^2)t_6 - 4t_9t_{10} &= 0,\\
(t_2t_3 - t_6^2)t_1 + (t_5t_6 - t_3t_4)t_4 + (t_4t_6 - t_2t_5)t_5 - 4t_{10}^2 &= 0,\\
\label{eq:eq8}
(t_2t_3 - t_6^2)t_7 + (t_1t_3 - t_5^2)t_8 + (t_1t_2 - t_4^2)t_9 - 4t_{10}t_{11} &= 0;
\end{align}
\item $2$ of $12$ quartic constraints:
\begin{align}
\label{eq:eq9}
t_{10}^2(t_1t_2 - t_4^2) - t_{10}^2(t_2t_3 - t_6^2) + t_9t_{10}(t_4t_6 - t_2t_5) - t_9t_{10}(t_1t_2 - t_4^2) + t_8t_{10}(t_5t_6 - t_3t_4) \notag\\+ t_8t_{10}(t_1t_3 - t_5^2) + t_2t_{11}(t_4t_5 - t_1t_6) - t_1t_{11}(t_4t_6 - t_2t_5) + t_2t_5^2t_{11} - t_1t_6^2t_{11} + t_2t_3t_7^2 \notag\\+ t_2t_6t_7^2 - t_1t_3t_8^2 - t_1t_5t_8^2 + t_1t_2t_8t_9 - t_1t_2t_7t_9 + t_1t_4t_8t_9 - t_2t_4t_7t_9 + t_1t_6t_7t_8 - t_2t_5t_7t_8 \notag\\+ 2t_1t_6t_8t_9 - 2t_2t_5t_7t_9 + 2t_2t_5t_7t_{10} - 2t_4t_5t_8t_{10} = 0,&\\
\label{eq:eq10}
t_{10}^2(t_1t_3 - t_5^2) - t_{10}^2(t_2t_3 - t_6^2) + t_8t_{10}(t_5t_6 - t_3t_4) - t_8t_{10}(t_1t_3 - t_5^2) + t_9t_{10}(t_4t_6 - t_2t_5) \notag\\+ t_9t_{10}(t_1t_2 - t_4^2) + t_3t_{11}(t_4t_5 - t_1t_6) - t_1t_{11}(t_5t_6 - t_3t_4) + t_3t_4^2t_{11} - t_1t_6^2t_{11} + t_2t_3t_7^2 \notag\\+ t_3t_6t_7^2 - t_1t_2t_9^2 - t_1t_4t_9^2 + t_1t_3t_8t_9 - t_1t_3t_7t_8 + t_1t_5t_8t_9 - t_3t_4t_7t_9 + t_1t_6t_7t_9 - t_3t_5t_7t_8 \notag\\+ 2t_1t_6t_8t_9 - 2t_3t_4t_7t_8 + 2t_3t_4t_7t_{10} - 2t_4t_5t_9t_{10} = 0.&
\end{align}
\end{itemize}

Substituting the parametrization~\eqref{eq:vect} into constraints~\eqref{eq:eq1}--\eqref{eq:eq10} and applying appropriate scaling factors ($z^3$ for equations~\eqref{eq:eq1}--\eqref{eq:eq8} and $z^4$ for equations~\eqref{eq:eq9}--\eqref{eq:eq10}) yields a system of $10$ polynomial equations
\begin{equation}
\label{eq:polsys}
f_i = 0, \quad i = 1, \ldots, 10
\end{equation}
in the $5$ variables $v, w, x, y, z$. The system is generically $0$-dimensional and possesses exactly $38$ complex roots.

\section{Description of the algorithm}
\label{sec:alg}

To solve system~\eqref{eq:polsys} efficiently for a given set of distance measurements, we employ the state-of-the-art automatic solver generator described in~\cite{martyushev2026automatic}. This generator takes as input an instance of a polynomial system with rational coefficients and returns the following data: (i) an action Laurent monomial (typically a variable or its reciprocal); (ii) an ordered $s$-tuple of monomial sets $A = (A_1, \ldots, A_s)$, where $s$ is the number of equations; (iii) a set of basic monomials $\mathcal B$. These data are used to expand the original polynomial system with monomial multiples. The Macaulay (coefficient) matrix of the expanded system is known as the elimination template. Standard linear algebra decompositions applied to this template recover the roots of the original polynomial system, see~\cite{kukelova2008automatic,byrod2009fast,larsson2017efficient,larsson2018beyond,martyushev2022optimizing,martyushev2026automatic,martyushev2025forward} for details.

Applying the generator~\cite{martyushev2026automatic} to system~\eqref{eq:polsys} yields the following data. The action monomial (variable) is $v$. The ordered $10$-tuple of monomial sets $A = (A_1, \ldots, A_{10})$ is defined by
\begin{equation}
\begin{split}
A_1 = A_2 = A_3 = A_4 = A_5 &= \{1, z, y, x, w, v, yz, xz, wz, vz, y^2, xy, wy, vy, x^2, wx, vx, w^2, vw, v^2\},\\
A_6 = A_7 &= \{1, z, x, w, v, xz, wz, vz, x^2, wx, vx, w^2, vw, v^2\},\\
A_8 &= \{1, z, x, w, v, xz, wz, vz\},\\
A_9 = A_{10} &= \{1, z, y, x, w, v\}.
\end{split}
\end{equation}
The set of basic monomials is
\setlength{\arraycolsep}{3pt}
\begin{equation}
\label{eq:vB}
\begin{array}{llcccccccccr}
\mathcal B = \{\!\!\! & vy^3, & wy^3, & y^3x, & y^4, & w^3z, & v^2zx, & vzwx, & w^2zx, & vzx^2, & wzx^2, & x^3z, \\
& v^2zy, & vzwy, & w^2zy, & vzxy, & wzxy, & x^2zy, & vzy^2, & wzy^2, & zy^2x, & zy^3, & x^2w, \\
& x^3, & yv^2, & ywv, & yw^2, & yxv, & yxw, & yx^2, & y^2v, & y^2w, & xy^2, & y^3, \\
& zv^2, & zvw, & zw^2, & zvx, & zwx, & zx^2, & zvy, & zwy, & zxy, & zy^2, & v^2, \\
& vw, & w^2, & vx, & wx, & x^2, & vy, & wy, & xy, & y^2, & zv, & zw, \\
& zx, & zy, & v, & w, & x, & y, & z, & 1\}. & & &
\end{array}
\end{equation}
Its cardinality is $\#\mathcal B = 63$, meaning that the solver outputs $63$ candidate roots, of which $25$ roots are spurious.

The expanded set of polynomials is defined as
\begin{equation}
F = \{m \cdot f_i \cln \forall m \in A_i,\, i = 1, \ldots, 10\}.
\end{equation}
Let $\mathcal U$ denote the support of $F$, i.e., the set of all monomials appearing in at least one polynomial from $F$. Its cardinality is $\#\mathcal U = 211$. We partition $\mathcal U$ into three pairwise disjoint subsets $\mathcal U = \mathcal E \cup \mathcal R \cup \mathcal B$, where
\begin{equation}
\label{eq:vRvE}
\mathcal R = \{v m \cln m \in \mathcal B\} \setminus \mathcal B,\qquad
\mathcal E = \mathcal U \setminus (\mathcal R \cup \mathcal B).
\end{equation}
The sets $\mathcal R$ and $\mathcal E$ are called the sets of reducible and excessive monomials, respectively~\cite{byrod2009fast}. Their cardinalities are $\#\mathcal R = 41$ and $\#\mathcal E = 107$.

For a finite set of monomials $\mathcal A \subseteq \mathcal U$, we denote by $\ve{\mathcal A}$ the vector of monomials in $\mathcal A$ arranged according to a fixed total ordering on $\mathcal U$.

The $148\times 211$ Macaulay matrix $M$ is defined so that its $(i, j)$th entry is the coefficient of the $i$th polynomial in $F$ at the $j$th monomial in $\ve{\mathcal U}$. Consequently,
$
M \ve{\mathcal U} = \mathbf 0
$
represents the polynomial system $F = 0$ in vector form.

The matrix $M$ inherits a block structure
$
M =
\begin{bmatrix}
M_{\mathcal E} & M_{\mathcal R} & M_{\mathcal B}
\end{bmatrix},
$
where $M_{\mathcal E}$ is a $148\times 107$ sparse submatrix. Applying LU decomposition with partial pivoting on $M_{\mathcal E}$ gives $M_{\mathcal E} = PLU$, where $P$ is a permutation matrix. Then
\begin{equation}
\label{eq:PLM}
(PL_1)^{-1} M =
\begin{bmatrix}
U & * & * \\
0 & C_{\mathcal R} & C_{\mathcal B}
\end{bmatrix},
\end{equation}
where $C_{\mathcal R}$ is a full-rank $41 \times 41$ matrix, $C_{\mathcal B}$ is a $41 \times 63$ matrix, $L_1 = \begin{bmatrix}L & \begin{matrix} 0\\ I\end{matrix} \end{bmatrix}$ is a $148 \times 148$ lower-triangular matrix, and $I$ is the $41 \times 41$ identity matrix. The matrix $PL_1$ is invertible since the main diagonal of $L_1$ consists of $1$'s.

From~\eqref{eq:PLM} we obtain $C_{\mathcal R} \ve{\mathcal R} = - C_{\mathcal B} \ve{\mathcal B}$, which leads to a generalized eigenproblem:
\begin{equation}
\label{eq:T0T1}
v T_1 \ve{\mathcal B} = T_0 \ve{\mathcal B},
\end{equation}
where the $63\times 63$ matrices $T_0$ and $T_1$ contain only $0$, $1$, and the entries of $-C_{\mathcal B}$ and $C_{\mathcal R}$, see~\cite{martyushev2025forward} for details of their construction.

Let $\mathbf u_p$ be the $p$th generalized eigenvector of the matrix pair $(T_0, T_1)$, with $(\mathbf u_p)_i$ representing its $i$th component. Then the $p$th candidate root of~\eqref{eq:polsys} is given by
\begin{equation}
\label{eq:pt}
\hat{v}_p = \frac{(\mathbf u_p)_{58}}{(\mathbf u_p)_{63}}, \quad \hat{w}_p = \frac{(\mathbf u_p)_{59}}{(\mathbf u_p)_{63}}, \quad \hat{x}_p = \frac{(\mathbf u_p)_{60}}{(\mathbf u_p)_{63}}, \quad \hat{y}_p = \frac{(\mathbf u_p)_{61}}{(\mathbf u_p)_{63}}, \quad \hat{z}_p = \frac{(\mathbf u_p)_{62}}{(\mathbf u_p)_{63}}.
\end{equation}
The corresponding vector $\hat{\mathbf t}_p$ is obtained from the $5$-tuple $(\hat{v}_p, \hat{w}_p, \hat{x}_p, \hat{y}_p, \hat{z}_p)$ by Eq.~\eqref{eq:vect}.

Thus we compute $63$ candidate roots, and $25$ false roots are eliminated from them through the following validation procedure. Let $\hat{t}_i$ be the $i$th component of $\hat{\mathbf t}_p$ and $T_i$ the $i$th component of the vector $\mathbf T$ defined in Eq.~\eqref{eq:vecT}. The squared distances $L_{ij}$ are recovered in closed form using the first six components of $\hat{\mathbf t}_p$ and $\hat{t}_{10}$:
\begin{align}
L_{12} &= \alpha\bigl(\hat{t}_2\hat{t}_3 - \hat{t}_6^2\bigr),\\
L_{13} &= \alpha\bigl(\hat{t}_1\hat{t}_3 - \hat{t}_5^2\bigr),\\
L_{14} &= \alpha\bigl(\hat{t}_1\hat{t}_2 - \hat{t}_4^2\bigr),\\
L_{23} &= L_{12} + L_{13} - 2\alpha\bigl(\hat{t}_5\hat{t}_6 - \hat{t}_3\hat{t}_4\bigr),\\
L_{24} &= L_{12} + L_{14} - 2\alpha\bigl(\hat{t}_4\hat{t}_6 - \hat{t}_2\hat{t}_5\bigr),\\
L_{34} &= L_{13} + L_{14} - 2\alpha\bigl(\hat{t}_4\hat{t}_5 - \hat{t}_1\hat{t}_6\bigr),
\end{align}
where $\alpha = -1/(4\hat{t}_{10})$. Since the four senders must span $3$-space, $\hat{t}_{10} \neq 0$ for true solutions. For each candidate root, we compute the relative error
$
\xi = \|\hat{\mathbf t}_p - \mathbf T\|/\|\hat{\mathbf t}_p\|
$
and identify the $38$ true roots as those with the smallest $\xi$ values.

Finally, the sender and receiver positions $\hat{\mathbf s}_i$ and $\hat{\mathbf r}_p$ are recovered by solving a trilateration problem using the computed $L_{ij}$ and the given distances $D_{ik}$. Complex solutions are discarded, and the Euclidean ambiguity is resolved by fixing
\begin{equation}
\hat{\mathbf s}_1 = \mathbf 0, \quad
(\hat{\mathbf s}_2)_2 = (\hat{\mathbf s}_2)_3 = (\hat{\mathbf s}_3)_3 = 0, \quad
(\hat{\mathbf s}_2)_1 > 0, \quad (\hat{\mathbf s}_3)_2 > 0, \quad (\hat{\mathbf s}_4)_3 > 0.
\end{equation}

\section{Experiments}
\label{sec:exper}

In this section, we validate the MATLAB implementation of the proposed algorithm on both synthetic and real-world data. All experiments were performed on a system with Intel Core i5-1155G7 processor.

We modelled $m$ senders and $n$ receivers, each uniformly distributed within a unit cube. The ground truth positions of the senders and receivers are represented by the $3$-vectors $\mathbf s_i$ and $\mathbf r_k$, respectively.

\subsection{Noise-free data}

First, we compared the proposed solver with state-of-the-art algorithms from~\cite{kuang2013complete,larsson2020upgrade,martyushev2026automatic} on noise-free data with $m = 4$ senders and $n = 6$ receivers. From the ground truth vectors $\mathbf s_i$ and $\mathbf r_k$, we computed the $24$ distances $d_{ik} = \|\mathbf s_i - \mathbf r_k\|$, which serve as the input to all solvers.

To evaluate numerical accuracy, we defined two error metrics:
\begin{align}
\label{eq:metric1}
\epsilon_1 &= \max\biggl(\frac{1}{mn}\sum_{i,k} \bigl(\|\hat{\mathbf s}_i - \hat{\mathbf r}_k\| - d_{ik}\bigr)^2\biggr)^{1/2},\\
\label{eq:metric2}
\epsilon_2 &= \min\biggl(\sum_{j > i} \bigl(\|\mathbf s_i - \mathbf s_j\| - \|\hat{\mathbf s}_i - \hat{\mathbf s}_j\|\bigr)^2 + \sum_{l > k} \bigl(\|\mathbf r_k - \mathbf r_l\| - \|\hat{\mathbf r}_k - \hat{\mathbf r}_l\|\bigr)^2\biggr)^{1/2},
\end{align}
where $\hat{\mathbf s}_i$ and $\hat{\mathbf r}_k$ are the estimated positions. The maximum in~\eqref{eq:metric1} and the minimum in~\eqref{eq:metric2} are taken over all real solutions. The metric $\epsilon_1$ is sensitive to the presence of false solutions (high values indicate their presence), while $\epsilon_2$ detects the absence of a ground truth solution among the computed ones.

\begin{figure}[ht]
\centering
\begin{tabular}{cc}
\resizebox{0.47\textwidth}{!}{\input{numerror1}} & \resizebox{0.47\textwidth}{!}{\input{numerror2}}\\
(a) & (b)
\end{tabular}
\caption{Distributions over $10^4$ trials for error metrics from~\eqref{eq:metric1} and~\eqref{eq:metric2}}
\label{fig:numerr}
\end{figure}

\begin{table}[ht]
\centering
\begin{tabular}{llllllll}
\hline\\[-6pt]
Solver & Med. $\epsilon_1$ & Ave. $\epsilon_1$ & $\epsilon_1 > 0.1$ & Med. $\epsilon_2$ & Ave. $\epsilon_2$ & $\epsilon_2 > 0.1$ & Speed \\[3pt]
\hline\\[-6pt]
New & $7.4\times 10^{-8}$ & $4.0\times 10^{-7}$ & $9.85\%$ & $1.2\times 10^{-7}$ & $2.4\times 10^{-7}$ & $7.82\%$ & $4.5$ ms \\
Martyushev et al.~\cite{martyushev2026automatic} & $1.8\times 10^{-4}$ & $3.2\times 10^{-4}$ & $33.52\%$ & $7.1\times 10^{-4}$ & $3.8\times 10^{-4}$ & $30.64\%$ & $6.0$ ms \\
Larsson et al.~\cite{larsson2020upgrade} & $3.4\times 10^{-3}$ & $2.3\times 10^{-3}$ & $43.26\%$ & $1.7\times 10^{-2}$ & $2.5\times 10^{-3}$ & $40.12\%$ & $10.6$ ms \\
Kuang et al.~\cite{kuang2013complete} & $3.3\times 10^{-5}$ & $1.1\times 10^{-4}$ & $30.47\%$ & $8.8\times 10^{-5}$ & $5.8\times 10^{-5}$ & $20.74\%$ & $41.3$ ms \\[3pt]
\hline
\end{tabular}
\caption{Error distribution statistics and average runtime}
\label{tab:numerr}
\end{table}

Figure~\ref{fig:numerr} shows the error distribution histograms for both metrics, while Table~\ref{tab:numerr} provides detailed statistics, including median and mean values over $10^4$ trials, and average runtime. The proposed solver significantly outperforms existing methods in numerical robustness. Its median $\epsilon_1$ of $7.4\times 10^{-8}$ is about three orders of magnitude lower than that of the next best solver from~\cite{kuang2013complete}. In terms of speed, the new solver is about $1.3$ times faster than the fastest alternative. Most of its runtime ($53.2\%$) is spent on template construction, primarily due to computing the initial $10\times 85$ coefficient matrix.

While the total number of geometrically distinct complex roots for the 4s/6r ToA problem is $38$, it is currently unknown whether there exists an instance of the problem with exactly $38$ real solutions. In our experiments, the observed maximum number of reliably identified and well separated real solutions was~$20$.

\subsection{Noisy data with outliers}

In real-world applications, ToA measurements are often corrupted by noise and may contain outliers, such as invalid or missing measurements. To address this, a hypothesize-and-test framework like RANSAC~\cite{fischler1981random} can be used. The RANSAC procedure consists of the following main steps. First, multiple minimal subsets of $4$ senders and $6$ receivers are randomly sampled from the full set of $m \geq 4$ senders and $n \geq 6$ receivers. Next, the proposed minimal solver generates hypotheses for each subset. Each hypothesis is scored by classifying distance measurements as inliers or outliers. Finally, the winning hypothesis, selected based on the highest inlier count, is further refined through several iterations of the Levenberg--Marquardt algorithm.

We evaluated the performance of the proposed solver within the RANSAC framework for the two types of overdetermined scenarios:
\begin{itemize}
\item $m = 4$, $n \geq 7$;
\item $m \geq 5$, $n = 6$.
\end{itemize}

To simulate measurement noise, each true distance $d_{ik}$ was perturbed as $d_{ik}(1 + s_{ik})$, where $s_{ik}$ follows a normal distribution with zero mean and standard deviation $\sigma$. Outliers were introduced by replacing randomly selected distances $d_{ik}$ with uniformly distributed random values in the interval $[0, 3]$.

\begin{figure}[ht]
\centering
\begin{tabular}{cc}
\resizebox{0.47\textwidth}{!}{\input{boxplot1}} & \resizebox{0.47\textwidth}{!}{\input{boxplot3}}\\
(a) & (b)\\[5pt]
\resizebox{0.47\textwidth}{!}{\input{boxplot2}} & \resizebox{0.47\textwidth}{!}{\input{boxplot4}}\\
(c) & (d)
\end{tabular}
\caption{Performance on noisy data with outliers: error $\epsilon_2$ defined in~\eqref{eq:metric2} versus (a) noise level $\sigma$ ($m = 4$, $n = 13$); (b) noise level $\sigma$ ($m = 9$, $n = 6$); (c) number of receivers $n$ ($m = 4$, $\sigma = 10^{-4}$); (d) number of senders $m$ ($n = 6$, $\sigma = 10^{-4}$). Each boxplot summarizes the results from $10^3$ independent trials, displaying the interquartile range ($25\%$ to $75\%$) as a box, the median as an internal line, whiskers extending to $1.5\times$ the interquartile range, and dots indicating outliers}
\label{fig:ransac}
\end{figure}

Four experimental configurations were evaluated. In the first two tests, the number of senders and receivers was fixed at $m = 4$, $n = 13$ (Figure~\ref{fig:ransac}(a)) and at $m = 9$, $n = 6$ (Figure~\ref{fig:ransac}(b)), while the noise level $\sigma$ was varied from $10^{-6}$ to $10^{-2}$. The third test maintained a constant number of senders $m = 4$ and noise level $\sigma = 10^{-4}$, while varying the number of receivers $n$ from $11$ to $15$ (Figure~\ref{fig:ransac}(c)). Finally, the fourth test maintained a constant number of receivers $n = 6$ and noise level $\sigma = 10^{-4}$, while varying the number of senders $m$ from $9$ to $13$ (Figure~\ref{fig:ransac}(d)). In all tests, the number of outliers was set to $0$, $2$, and $4$. The results demonstrate that our RANSAC implementation with the new solver remains robust even with up to $4$ outliers and high noise levels.

\subsection{Real data}

We validated the proposed solver on the real-world \texttt{bassh2} dataset~\cite{zhayida2016automatic}, which contains eight channel audio recordings of a moving sound source in a reverberant room together with ground truth microphone positions and source trajectories. The dataset is provided together with the multi-module pipeline for automatic acoustic geometry calibration. We integrated our solver into this system, replacing only its geometric estimation module described in~\cite[Section 6]{zhayida2016automatic}. All other components, including signal processing, range‑difference extraction, offset estimation, and matching, remained unchanged.

For each time instant, the original pipeline first extracts range-difference measurements and performs offset estimation to obtain absolute distance estimates. These distances are then fed into our solver within a RANSAC framework to generate initial hypotheses for microphone and source positions. The best hypothesis is subsequently refined by the original system's bundle adjustment step.

\begin{figure}[ht]
\centering
\begin{tabular}{cc}
\resizebox{0.47\textwidth}{!}{\input{real1}} & \resizebox{0.47\textwidth}{!}{\input{real2}}\\
(a) & (b)
\end{tabular}
\caption{Reconstructed sound source trajectory and microphone positions for the \texttt{bassh2} dataset~\cite{zhayida2016automatic}: (a) initial reconstruction (RMSE $= 0.1291$~m); (b) final result after bundle adjustment (RMSE $= 0.019$~m)}
\label{fig:real}
\end{figure}

With this hybrid approach, the initial reconstruction achieved a root mean square error (RMSE) of $0.1291$~m relative to ground truth (Figure~\ref{fig:real}(a)). After bundle adjustment, the RMSE dropped to $0.019$~m (Figure~\ref{fig:real}(b)).

\section{Conclusion}
\label{sec:disc}

The paper introduces a new parametrization for the minimal 4s/6r ToA problem, combining Cayley--Menger determinants with an implicitization technique. The resulting algebraic solver reduces the problem to ten polynomial equations in five variables and solves them via standard linear algebra decompositions (PLU$+$QZ).

Experiments on synthetic data demonstrate a substantial improvement in numerical accuracy, by orders of magnitude, over existing state-of-the-art methods. The solver achieves an average runtime of $4.5$~ms on standard hardware, enabling its effective use as a hypothesis generator within a RANSAC framework. Moreover, tests on a real dataset confirm that the RANSAC-based pipeline provides a reliable initial guess for bundle adjustment refinement.

The proposed approach may generalize to other minimal ToA problems (e.g., 5s/5r) and to broader sensor network calibration tasks. Future work should address degenerate configurations (e.g., coplanar senders/receivers) and extend the approach to Time-Difference-of-Arrival and hybrid calibration problems.

\subsection*{Data availability}

The MATLAB implementation of the proposed algorithm is available at \href{https://github.com/martyushev/t-d-oa}{https://github.com/martyushev/t-d-oa}.

\bibliographystyle{plain}
\bibliography{biblio}

\end{document}